\let\doendproof\endproof
\renewcommand\endproof{\qed\doendproof}
\let\leq\leqslant
\let\geq\geqslant
\let\epsilon\varepsilon
\begin{document}

\title{Graph Drawings with One Bend and Few Slopes}

\author{Kolja Knauer\inst{1}\fnmsep\thanks{Supported by ANR EGOS grant ANR-12-JS02-002-01 and PEPS grant EROS.}\and Bartosz Walczak\inst{2}\fnmsep\thanks{Supported by MNiSW grant 911/MOB/2012/0.}}
\institute{Aix-Marseille Universit\'e, CNRS, LIF UMR 7279, Marseille, France\\\email{kolja.knauer@lif.univ-mrs.fr}\and Theoretical Computer Science Department, Faculty of Mathematics and Computer~Science, Jagiellonian University, Krak\'ow, Poland\\\email{walczak@tcs.uj.edu.pl}}

\maketitle
\thispagestyle{plain}

\begin{abstract}
We consider drawings of graphs in the plane in which edges are represented by polygonal paths with at most one bend and the number of different slopes used by all segments of these paths is small.
We prove that $\lceil\frac{\Delta}{2}\rceil$ edge slopes suffice for outerplanar drawings of outerplanar graphs with maximum degree $\Delta\geq 3$.
This matches the obvious lower bound.
We also show that $\lceil\frac{\Delta}{2}\rceil+1$ edge slopes suffice for drawings of general graphs, improving on the previous bound of $\Delta+1$.
Furthermore, we improve previous upper bounds on the number of slopes needed for planar drawings of planar and bipartite planar graphs.
\end{abstract}

\section{Introduction}

A \emph{one-bend drawing} of a graph $G$ is a mapping of the vertices of $G$ into distinct points of the plane and of the edges of $G$ into polygonal paths each consisting of at most two segments joined at the \emph{bend} of the path, such that the polygonal paths connect the points representing their end-vertices and pass through no other points representing vertices nor bends of other paths.
If it leads to no confusion, in notation and terminology, we make no distinction between a vertex and the corresponding point, and between an edge and the corresponding path.
The \emph{slope} of a segment is the family of all straight lines parallel to this segment.
The \emph{one-bend slope number} of a graph $G$ is the smallest number $s$ such that there is a one-bend drawing of $G$ using $s$ slopes.
Similarly, one defines the \emph{planar one-bend slope number} and the \emph{outerplanar one-bend slope number} of a planar and respectively outerplanar graphs if the drawing additionally has to be planar and respectively outerplanar.
Since at most two segments at each vertex can use the same slope, $\lceil\frac{\Delta}{2}\rceil$ is a lower bound on the one-bend slope number.
Here and further on, $\Delta$ denotes the maximum degree of the graph considered.

\subsection{Results}

Our main contribution (Theorem~\ref{thm:outerplanar}) is that the outerplanar one-bend slope number of every outerplanar graph is equal to the above-mentioned obvious lower bound of $\lceil\frac{\Delta}{2}\rceil$ except for graphs with $\Delta=2$ that contain cycles, which need $2$ slopes.
For general graphs, we show that every graph admits a one-bend drawing using at most $\lceil\frac{\Delta}{2}\rceil+1$ slopes (Theorem~\ref{thm:general}), which improves on the upper bound of $\Delta+1$ shown in~\cite{Duj-07a}.

For planar graphs, it was shown in~\cite{Kes-13} that the planar one-bend slope number is always at most $2\Delta$.
In the same paper, it was shown that sometimes $\frac{3}{4}(\Delta-1)$ slopes are necessary.
We improve the upper bound to $\frac{3}{2}\Delta$ (Proposition~\ref{prop:planar}) and bound the planar one-bend slope number of planar bipartite graphs by $\Delta+1$ (Proposition~\ref{prop:bipplanar}).
We also show that there are planar bipartite graphs requiring $\frac{2}{3}(\Delta-1)$ slopes in any planar one-bend drawing (Proposition~\ref{prop:lowbipplanar}).
Every planar graph admits a planar $2$-bend drawing with $\lceil\frac{\Delta}{2}\rceil$ slopes~\cite{Kes-13}.

Apart from improving upon earlier results, one of our motivations for studying the one-bend slope number is that it arises as a relaxation of the slope number, a parameter extensively studied in recent years.
The one-bend slope number also naturally generalizes problems concerning one-bend orthogonal drawings, which have been of interest in the graph drawing community over the past years.
We continue with a short overview of these studies.

\subsection{Related Results: Slope Number}

The \emph{slope number} of a graph $G$, introduced by Wade and Chu~\cite{Wad-94}, is the smallest number $s$ such that there is a \emph{straight-line drawing} of $G$ using $s$ slopes.
As for the one-bend slope number, $\lceil\frac{\Delta}{2}\rceil$ is an obvious lower bound on the slope number.
Dujmovi\'c and Wood~\cite{Duj-04} asked whether the slope number can be bounded from above by a function of the maximum degree.
This was answered independently by Bar\'at, Matou\v{s}ek and Wood~\cite{Bar-06} and by Pach and P\'alv\"olgyi~\cite{Pac-06} in the negative: graphs with maximum degree $5$ can have arbitrarily large slope number.
Dujmovi\'c, Suderman and Wood~\cite{Duj-07a} further showed that for all $\Delta\geq 5$ and sufficiently large $n$, there exists an $n$-vertex graph with maximum degree $\Delta$ and slope number at least $n^{\frac{1}{2}-\frac{1}{\Delta-2}-o(1)}$.
On the other hand, Mukkamala and P\'alv\"olgyi~\cite{Muk-12} proved that graphs with maximum degree $3$ have slope number at most $4$, improving earlier results of Keszegh, Pach, P\'alv\"olgyi and T\'oth~\cite{Kes-08} and of Mukkamala and Szegedy~\cite{Muk-09}.
The question whether graphs with maximum degree $4$ have slope number bounded by a constant remains open.

The situation is different for \emph{planar} straight-line drawings.
It is well known that every planar graph admits a planar straight-line drawing.
The \emph{planar slope number} of a planar graph $G$ is the smallest number $s$ such that there is a planar straight-line drawing of $G$ using $s$ slopes.
This parameter was first studied by Duj\-mo\-vi\'c, Eppstein, Suderman and Wood~\cite{Duj-07b} in relation to the number of vertices.
They also asked whether the planar slope number of a planar graph is bounded in terms of its maximum degree.
Jel\'{\i}nek, Jel\'{\i}nkov\'a, Kratochv\'{\i}l, Lidick\'y, Tesa\v{r} and Vysko\v{c}il~\cite{Jel-13} gave an upper bound of $O(\Delta^5)$ for planar graphs of treewidth at most $3$.
Lenhart, Liotta, Mondal and Nishat~\cite{Len-13} showed that the maximum planar slope number of a graph of treewidth at most $2$ lies between $\Delta$ and $2\Delta$.
Di Giacomo, Liotta and Montecchiani~\cite{DiG-14} showed that subcubic planar graphs with at least $5$ vertices have planar slope number at most $4$.
The problem has been solved in full generality by Keszegh, Pach and P\'alv\"olgyi~\cite{Kes-13}, who showed (with a non-constructive proof) that the planar slope number is bounded from above by an exponential function of the maximum degree.
It is still an open problem whether this can be improved to a polynomial upper bound.

Knauer, Micek and Walczak~\cite{Kna-14} showed that every outerplanar graph with $\Delta\geq 4$ has an outerplanar straight-line drawing using at most $\Delta-1$ slopes and this bound is best possible.
For outerplanar graphs with $\Delta=2$ or $\Delta=3$, the optimal upper bound is $3$.

\subsection{Related Results: Orthogonal Drawings}

Drawings of graphs that use only the horizontal and the vertical slopes are called \emph{orthogonal}.
Every drawing with two slopes can be made orthogonal by a simple affine transformation of the plane.
Felsner, Kaufmann and Valtr~\cite{Fel-14} proved that a graph $G$ with $\Delta\leq 4$ admits a one-bend orthogonal drawing if and only if every induced subgraph $H$ of $G$ satisfies $E(H)\leq 2V(H)-2$.
Since outerplanar graphs satisfy the latter condition, it follows that every outerplanar graph with $\Delta\leq 4$ admits a one-bend orthogonal drawing (our Theorem~\ref{thm:outerplanar} gives an outerplanar one-bend orthogonal drawing).
Biedl and Kant~\cite{Bie-98} and Liu, Morgana and Simeone~\cite{Liu-98} showed that every planar graph with $\Delta\leq 4$ has a planar $2$-bend orthogonal drawing with the only exception of the octahedron, which has a planar $3$-bend orthogonal drawing.
Kant~\cite{Kan-96} showed that every planar graph with $\Delta\leq 3$ has a planar one-bend orthogonal drawing with the only exception of $K_4$.

\section{Outerplanar Graphs}

The main contribution of this section is to show the following:

\begin{theorem}\label{thm:outerplanar}
Every outerplanar graph with maximum degree\/ $\Delta$ admits an outerplanar one-bend drawing using at most\/ $\max\{\lceil\frac{\Delta}{2}\rceil,2\}$ slopes.
Furthermore, the set of slopes can be prescribed arbitrarily.
\end{theorem}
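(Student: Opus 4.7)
The plan is to induct on $|V(G)|$. The case $\Delta\leq 2$ is handled separately: a path is drawn as a zig-zag using only one (or two) prescribed slopes, and a cycle is drawn as a ``staircase'' rectangle using any two prescribed slopes, giving the upper bound of $2$. For the inductive step with $\Delta\geq 3$, I first reduce to the $2$-connected case via block decomposition. If $c$ is a cut vertex of $G$ with a pendant block $B$, I apply induction separately to $G-(V(B)\setminus\{c\})$ and to $B$, each with the fixed prescribed slope set. At $c$, the total degree is at most $\Delta$, so the two subdrawings together use each slope at most twice; a suitably scaled and rotated copy of the drawing of $B$ can be placed into an empty angular sector of the outer face at $c$, preserving outerplanarity.

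Now assume $G$ is $2$-connected, so its outer face is a Hamilton cycle $C$. The weak dual of the drawing is a tree, and each of its leaves gives a degree-$2$ vertex $v$ of $G$ on $C$. Let $u,w$ be the neighbours of $v$, and set $G'=G-v$ if $uw\in E(G)$, or $G'=(G-v)+uw$ otherwise. Then $G'$ is a $2$-connected outerplanar graph on $|V(G)|-1$ vertices with maximum degree at most $\Delta$, so by induction it admits an outerplanar one-bend drawing using the prescribed slopes. I then modify this drawing by inserting $v$ near the (straight or bent) path that realises $uw$ in $G'$, routing the two new one-bend paths $uv$ and $vw$ using slopes that are still available at $u$ and $w$. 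The slope budget is there: at $u$, the degree goes up by at most one to $\deg_G(u)\leq\Delta$, so by pigeonhole at least one of the $\lceil\Delta/2\rceil$ prescribed slopes is used at most once so far at $u$, and similarly at $w$.

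To make the geometric insertion rigorous I carry a strong \emph{extendability invariant} along the induction: for every edge $xy$ on the outer boundary there is a region $R_{xy}$ lying outside the current drawing such that, for every slope $\sigma_x$ still available at $x$ and every slope $\sigma_y$ still available at $y$, the rays of those slopes can be extended into $R_{xy}$ and meet at a point $p\in R_{xy}$ with the two resulting one-bend paths $x\to p$ and $y\to p$ disjoint from the rest of the drawing. Under this invariant, placing $v$ amounts to picking $\sigma_u$ and $\sigma_w$ and taking $p=v$. After the insertion, $uv$ and $vw$ become the new outer edges, and one has to re-establish the invariant for them (and leave it intact for all other outer edges), which is the most delicate part of the construction.

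The main obstacle is the design of the extendability invariant: it must be tight enough to be re-established after each local modification, yet flexible enough to remain valid for \emph{every} prescribed slope set. The two troublesome spots are the block-gluing reduction (where one must marry the invariants coming from the two sides at a cut vertex) and the non-triangular case $uw\notin E(G)$ (where the edge $uw$ of $G'$ is realised by a one-bend path whose slopes at $u,w$ may not match those we want for $uv,vw$); the latter is resolved by strengthening the invariant so that one may, when applying induction to $G'$, prescribe \emph{which} available slopes are used by the edge $uw$ at its endpoints. Once the invariant is in place, the inductive step reduces to an elementary geometric argument inside the region $R_{uw}$.
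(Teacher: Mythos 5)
Your overall strategy (peel off a degree-$2$ vertex of the outer cycle, or a pendant block, and re-insert it) is genuinely different from the paper's, which pads every vertex to degree $1$ or $2s$ so that all edge directions at every vertex are fixed in advance, and then draws the graph top-down along a recursive decomposition into bubbles, confining each piece to a small neighborhood of a ``target quadrilateral'' spanned by two \emph{consecutive} directions. However, your proposal has genuine gaps, and they sit exactly where you locate ``the most delicate part'': the extendability invariant is never defined, and the version you sketch cannot hold. Requiring that \emph{every} pair of available slopes at $x$ and $y$ yields rays meeting in $R_{xy}$ fails already when the two slopes coincide (parallel rays do not meet), and, more seriously, ``available'' is the wrong notion: a slope used once at $u$ has only one free direction, and at a vertex of degree close to $\Delta$ that direction may point into an inner face of the current drawing, so it cannot start the new outer edge $uv$ without crossing. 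The pigeonhole count $2\lceil\frac{\Delta}{2}\rceil\geq\deg_G(u)$ therefore does not by itself produce a usable slope; one needs control over \emph{directions} rather than slopes, and over how they interleave with the angular sector of the outer face at $u$. This is precisely what the paper's machinery (leading directions consecutive in the cyclic order of the $2s$ directions, all edge directions at a vertex predetermined by the degree normalization) is engineered to provide, and what your plan defers to an unspecified invariant.

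The block-gluing step is also broken as stated: you may not rotate the drawing of $B$, since the slope set is prescribed and rotation changes the slopes used. Without rotation, the two independently obtained drawings can jointly use some slope more than twice at the cut vertex $c$ (for instance $\Delta=8$ with four slopes, both sides having degree $4$ at $c$ and each using two slopes twice; four segments of one slope at $c$ must overlap). Repairing this again requires strengthening the induction hypothesis to control which directions are consumed at $c$ --- the same missing invariant. So while a peeling scheme of this kind might conceivably be completable, as written the proposal correctly identifies the central difficulty but does not overcome it; essentially the entire content of the theorem lies in the invariant you leave unspecified.
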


The structure of the proof of Theorem~\ref{thm:outerplanar} will follow the same recursive decomposition of an outerplanar graph into \emph{bubbles} that was used in~\cite{Kna-14} in the proof that every outerplanar graph has a straight-line outerplanar drawing using at most $\Delta-1$ slopes.
Although this decomposition is very natural, for completeness we present it in detail recalling definitions and lemmas from~\cite{Kna-14}.

Let $G$ be an outerplanar graph provided together with its arbitrary outerplanar drawing in the plane.
The drawing determines the cyclic order of edges at each vertex and identifies the \emph{outer face} (which is unbounded and contains all vertices on its boundary) and the \emph{inner faces} of $G$.
The edges on the boundary of the outer face are \emph{outer edges}, and all remaining ones are \emph{inner edges}.
A \emph{snip} is a simple closed counterclockwise-oriented curve $\gamma$ which
\begin{itemize}
\item passes through some pair of vertices $u$ and $v$ of $G$ (possibly being the same vertex) and through no other vertex of $G$,
\item on the way from $v$ to $u$ goes entirely through the outer face of $G$ and crosses no edge of $G$,
\item on the way from $u$ to $v$ (considered only when $u\ne v$) goes through inner faces of $G$ possibly crossing some inner edges of $G$ that are not incident to $u$ or $v$, each at most once,
\item crosses no edge of $G$ incident to $u$ or $v$ at a point other than $u$ or $v$.
\end{itemize}
Every snip $\gamma$ defines a \emph{bubble} $H$ in $G$ as the subgraph of $G$ induced by the vertices lying on or inside $\gamma$.
Since $\gamma$ crosses no outer edges, $H$ is a connected induced subgraph of $G$.
The \emph{roots} of $H$ are the vertices $u$ and $v$ together with all vertices of $H$ adjacent to $G-H$.
The snip $\gamma$ breaks the cyclic clockwise order of the edges of $H$ around each root of $H$ making it a linear order, which we envision as going from left to right.
We call the first edge in this order \emph{leftmost} and the last one \emph{rightmost}. 
The \emph{root-path} of $H$ is the simple oriented path $P$ in $H$ that starts at $u$ with the rightmost edge, continues counterclockwise along the boundary of the outer face of $H$, and ends at $v$ with the leftmost edge.
If $u=v$, then the root-path consists of that single vertex only.
All roots of $H$ lie on the root-path---their sequence in the order along the root-path is the \emph{root-sequence} of $H$.
A \emph{$k$-bubble} is a bubble with $k$ roots.
See Fig.~\ref{fig:bubble} for an illustration.

\begin{figure}[t]
\begin{center}
\includegraphics[width = \textwidth]{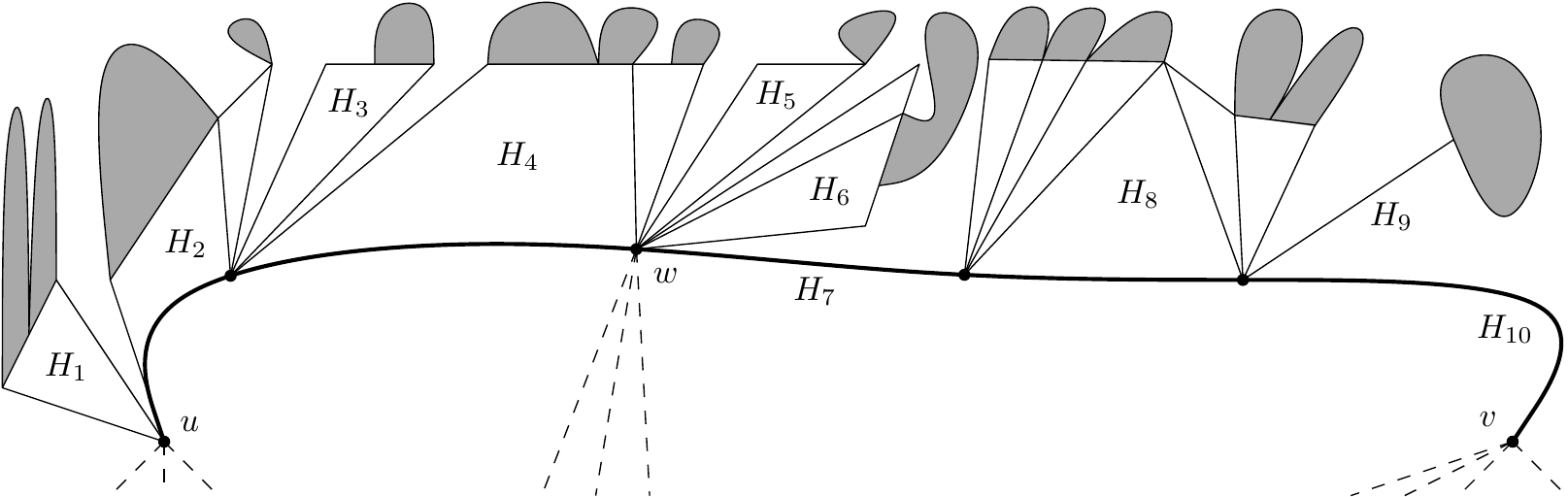}
\caption{A $3$-bubble $H$ with root-path drawn thick, root-sequence $(u,w,v)$ (connected to the remaining graph by dashed edges), and splitting sequence $(H_1,\ldots,H_{10})$, in which $H_1$, $H_3$, $H_5$, $H_6$, $H_9$ are v-bubbles and $H_2$, $H_4$, $H_7$, $H_8$, $H_{10}$ are e-bubbles.}
\label{fig:bubble}
\end{center}
\end{figure}

Except at the very end of the proof where we regard the entire $G$ as a bubble, we deal with bubbles $H$ whose first root $u$ and last root $v$ are adjacent to $G-H$.
For such bubbles $H$, all the roots, the root-path, the root-sequence and the left-to-right order of edges at every root do not depend on the particular snip $\gamma$ used to define $H$.
Specifically, for such bubbles $H$, the roots are exactly the vertices adjacent to $G-H$, while the root-path consists of the edges of $H$ incident to inner faces of $G$ that are contained in the outer face of $H$.
From now on, we will refer to the roots, the root-path, the root-sequence and the left-to-right order of edges at every root of a bubble $H$ without specifying the snip $\gamma$ explicitly.

\begin{lemma}[{\cite[Lemma 1]{Kna-14}}]\label{lem:split}
Let\/ $H$ be a bubble with root-path\/ $v_1\ldots v_k$.
Every component of\/ $H-\{v_1,\ldots,v_k\}$ is adjacent to either one vertex among\/ $v_1,\ldots,v_k$ or two consecutive vertices from\/ $v_1,\ldots,v_k$.
Moreover, there is at most one component adjacent to\/ $v_i$ and\/ $v_{i+1}$ for\/ $1\leq i<k$.
\end{lemma}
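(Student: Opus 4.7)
My plan is to analyze the outer-face boundary walk of $H$ and exploit outerplanarity. As an induced subgraph of the outerplanar graph $G$ in its given embedding, $H$ is itself outerplanar, so every vertex of $H$ lies on the outer face boundary. Let $W$ be the closed walk tracing this boundary counterclockwise. By the definition of the root-path, $W$ decomposes as a concatenation of $P = v_1 v_2 \ldots v_k$ (traversed in order) with a complementary walk $W'$ from $v_k$ back to $v_1$. Every non-root vertex of $H$ appears in $W'$; a root-path vertex may reappear in $W'$ only when it is a cut vertex of $H$ whose outer-face corners straddle both the outer-arc and inner-arc sides of the snip.

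I would next establish that the vertices of each component $C$ of $H - \{v_1, \ldots, v_k\}$ appear in $W'$ as a single maximal contiguous block of non-root vertices, flanked on each side by an occurrence of a root-path vertex (or by the endpoints $v_k$, $v_1$ of $W'$). The forward direction (vertices in the same block lie in the same component) is immediate since successive vertices along $W$ are adjacent in $H$. The reverse direction relies on outerplanarity: if two non-root vertices $x$, $z$ in different blocks were joined by a path $Q$ in $H - \{v_1, \ldots, v_k\}$, then $Q$ together with the portion of $W'$ from $x$ to $z$ would bound a region containing the separating root-path vertex $v_r$ in its interior, contradicting that every vertex of $H$ must lie on the outer face. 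The same outerplanarity argument shows that $C$'s only root-path neighbors in $H$ are the two flanking occurrences.

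To finish the first claim, I would show that the two flanking occurrences are equal or consecutive on $P$. Suppose for contradiction they are $v_a$, $v_b$ with $a - b \geq 2$ (after relabeling so that $a > b$); combining the root-path segment $v_b v_{b+1} \ldots v_a$ with a connecting path through $C$ via the flanking edges then yields a cycle $Z$ in $H$. By outerplanarity, no vertex of $H$ lies in the interior of $Z$, yet for any intermediate root-path vertex $v_r$ with $b < r < a$, the continuation of $W'$ past either flanking appearance must eventually reach an occurrence of $v_r$; tracing this continuation together with the portion of $P$ from $v_r$ to $v_a$ (or $v_b$) produces a second cycle that must enclose $v_r$ itself, a contradiction. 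The second claim of the lemma---at most one component for each consecutive pair $v_i$, $v_{i+1}$---follows by an analogous argument: two such components would force either $v_i$ or $v_{i+1}$ to appear multiple times in $W'$ in an order incompatible with the counterclockwise traversal of the planar embedding. The main obstacle is making the cycle-enclosure argument rigorous, especially when $v_a = v_b$ is a single cut vertex or when a flanking occurrence coincides with an endpoint $v_k$ or $v_1$ of $W'$.
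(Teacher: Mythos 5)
The paper does not actually prove this lemma---it imports it from \cite{Kna-14}---so I judge your argument on its own merits. There is a genuine gap at the step where you conclude that the two flanking root-path vertices are consecutive. Everything you use about $H$ (that it is outerplanar with every vertex on its outer face, and that $P$ is a contiguous arc of the outer boundary walk $W$, so $W=P\cdot W'$) is also satisfied by a configuration in which the conclusion fails: take $H$ to be the $4$-cycle $v_1v_2v_3x$ with $P=v_1v_2v_3$ and $W'=v_3\,x\,v_1$. Here the component $\{x\}$ of $H-\{v_1,v_2,v_3\}$ is a single contiguous block of $W'$ flanked by $v_3$ and $v_1$, which are not consecutive on $P$. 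Your intended contradiction (``the continuation of $W'$ past either flanking appearance must eventually reach an occurrence of $v_r$'') does not materialize: $v_2$ occurs only in $P$, it lies \emph{on} the cycle $Z=v_1v_2v_3x$ rather than in its interior, and no second enclosing cycle appears. Since your argument never uses anything beyond these properties of $H$, it cannot be correct as written; the lemma is simply false for an arbitrary boundary arc $P$ of an arbitrary outerplanar $H$.

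What is missing is the defining property of the root-path coming from the snip: the snip runs from $u$ to $v$ through \emph{inner faces of $G$} just outside $P$, so every edge $v_iv_{i+1}$ of $P$ is incident in $G$ to an inner face of $G$ contained in the outer face of $H$. It is this, combined with the outerplanarity of $G$ (not merely of $H$), that excludes the configuration above: a component reaching $v_a$ and $v_b$ with $b\geq a+2$ closes a cycle of $G$ on one side of the subpath $v_a\cdots v_b$, while the inner faces traversed by the snip lie on the other side of the edges $v_av_{a+1}$ and $v_{a+1}v_{a+2}$; together these surround $v_{a+1}$ with bounded faces of $G$ and cut it off from the outer face of $G$, a contradiction. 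The same ingredient is needed for the second claim: outerplanarity of $H$ alone only forces the edge $v_iv_{i+1}$ to be sandwiched between two components attached to both its endpoints, and you must then invoke that this edge borders an inner face of $G$ lying in the outer face of $H$ to reach a contradiction. Your block decomposition of $W'$ is a reasonable organizing device, but the consecutiveness argument has to be rebuilt around the snip and the embedding of $G$, not just the embedding of $H$.
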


Lemma~\ref{lem:split} allows us to assign each component of $H-\{v_1,\ldots,v_k\}$ to a vertex of $P$ or an edge of $P$ so that every edge is assigned at most one component.
For a component $C$ assigned to a vertex $v_i$, the graph induced by $C\cup\{v_i\}$ is called a \emph{v-bubble}.
Such a v-bubble is a $1$-bubble with root $v_i$. 
For a component $C$ assigned to an edge $v_iv_{i+1}$, the graph induced by $C\cup\{v_i,v_{i+1}\}$ is called an \emph{e-bubble}.
Such an e-bubble is a $2$-bubble with roots $v_i$ and $v_{i+1}$.
If no component is assigned to an edge of $P$, then we let that edge alone be a \emph{trivial e-bubble}. 
All v-bubbles of $v_i$ in $H$ are naturally ordered by their clockwise arrangement around $v_i$ in the drawing.
All this leads to a decomposition of the bubble $H$ into a sequence $(H_1,\ldots,H_b)$ of v- and e-bubbles such that the naturally ordered v-bubbles of $v_1$ precede the e-bubble of $v_1v_2$, which precedes the naturally ordered v-bubbles of $v_2$, and so on.
We call it the \emph{splitting sequence} of $H$. 
The splitting sequence of a single-vertex $1$-bubble is empty.
Every $1$-bubble with more than one vertex is a v-bubble or a bouquet of several v-bubbles.
The splitting sequence of a $2$-bubble may consist of several v- and e-bubbles.
Again, see Fig.~\ref{fig:bubble} for an illustration.
 
\begin{figure}[t]
\begin{center}
\includegraphics[width=\textwidth]{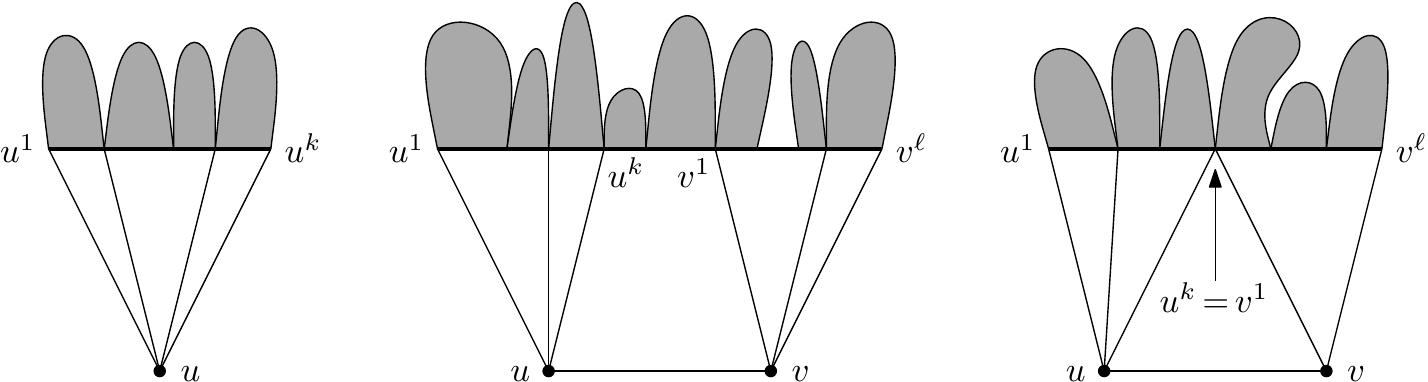}
\caption{Various ways of obtaining smaller bubbles from v- and e-bubbles described in Lemma~\ref{lem:bubble}.
The new bubbles are grayed, and the new root-paths are drawn thick.}
\label{fig:nextbubble}
\end{center}
\end{figure}

The following lemma provides the base for the recursive structure of the proof of Theorem~\ref{thm:outerplanar}.
See Fig.~\ref{fig:nextbubble} for an illustration.

\begin{lemma}[{\cite[Lemma 2, statements 2.1 and 2.3]{Kna-14}}]\label{lem:bubble}
\begin{enumerate}
\item\label{item:v-bubble}
Let\/ $H$ be a v-bubble rooted at\/ $u$. 
Let\/ $u^1,\ldots,u^k$ be the neighbors of\/ $u$ in\/ $H$ from left to right. 
Then\/ $H-\{u\}$ is a bubble with root-sequence\/ $(u^1,\ldots,u^k)$.
\item\label{item:e-bubble}
Let\/ $H$ be an e-bubble with roots\/ $u$ and\/ $v$. 
Let\/ $u^1,\ldots,u^k,v$ and\/ $u,v^1,\ldots,v^\ell$ be respectively the neighbors of\/ $u$ and\/ $v$ in\/ $H$ from left to right.
Then\/ $H-\{u,v\}$ is a bubble with root-sequence\/ $(u^1,\ldots,u^k,\allowbreak v^1,\ldots,v^\ell)$ in which\/ $u^k$ and\/ $v^1$ coincide if the inner face of\/ $H$ containing\/ $uv$ is a triangle.
\end{enumerate}
\end{lemma}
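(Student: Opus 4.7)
The strategy in both parts is to produce a snip $\gamma'$ that defines the desired bubble by performing local surgery on the snip $\gamma$ of $H$ inside small disks around the old roots. Once $\gamma'$ is shown to be a valid snip enclosing exactly the claimed vertex set, identifying the root-sequence is immediate from the outerplanarity of $H$ and the definition of the left-to-right order of edges at a root.

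For part~\ref{item:v-bubble}, we fix a closed disk $D$ around $u$ so small that $D$ meets only the edges $uu^1,\ldots,uu^k$ and no other vertex of $G$. The snip $\gamma$, which is a loop through $u$ lying in the outer face of $H$, enters and leaves $D$ at two points of $\partial D$ on the two sides of the fan $uu^1,\ldots,uu^k$. We replace the two arcs of $\gamma\cap D$ by a single simple arc $\alpha$ inside $D$ that starts at the entry point of $\gamma$, runs along the inside of $uu^1$ to $u^1$, threads once around $u$ through the $k-1$ sectors cut out by the edges $uu^i$ while crossing each intermediate edge $uu^i$ ($2\leq i\leq k-1$) exactly once, passes through $u^k$, and then continues to the exit point of $\gamma$. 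Setting $\gamma'=(\gamma\setminus D)\cup\alpha$, we verify that $\gamma'$ is a simple counterclockwise closed curve, passes only through $u^1$ and $u^k$, crosses no edge incident to $u^1$ or $u^k$ except at those vertices themselves (here is where $D$ being small matters), and crosses each intermediate inner edge $uu^i$ at most once. Hence $\gamma'$ is a snip, and its interior equals the interior of $\gamma$ minus $u$, so $\gamma'$ defines $H-\{u\}$. The roots of $H-\{u\}$ comprise the two endpoints of $\gamma'$ together with the vertices of $H-\{u\}$ adjacent to $(G-H)\cup\{u\}$; since $H$ was a v-bubble no vertex of $H-\{u\}$ is adjacent to $G-H$, so the remaining roots are precisely $u^2,\ldots,u^{k-1}$. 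Tracing the root-path counterclockwise from $u^1$ along the outer-face boundary of $H-\{u\}$ (which now incorporates the former inner faces of $H$ incident to $u$) visits the roots in the order $u^1,u^2,\ldots,u^k$.

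For part~\ref{item:e-bubble} we perform the analogous surgery in two disjoint small disks $D_u$ and $D_v$ around $u$ and $v$. Inside $D_u$ the inserted arc runs from the entry point of $\gamma$ through $u^1$ around $u$ to the sector between $uu^k$ and $uv$; symmetrically inside $D_v$ the arc runs from the sector between $uv$ and $vv^1$ around $v$ through $v^\ell$ to the exit point of $\gamma$. The two modified pieces are joined by a short arc through the inner face of $H$ whose boundary contains $uv$. If this face is a triangle $uvw$, then $w=u^k=v^1$ and the two detour arcs already meet at $w$, so no linking arc is needed; otherwise $u^k\neq v^1$. The resulting $\gamma'$ passes through $u^1$ and $v^\ell$ and is a snip defining $H-\{u,v\}$ by the same verification as above, with root-sequence $(u^1,\ldots,u^k,v^1,\ldots,v^\ell)$ in the generic case and with $u^k=v^1$ collapsing into a single intermediate root $w$ in the triangle case.

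The main obstacle is the rigorous execution of the surgery so that $\gamma'$ satisfies \emph{all} snip axioms: it must remain simple and counterclockwise oriented, pass through no vertex other than its two designated endpoints, avoid crossing any edge incident to those endpoints (except at the endpoints themselves), and cross each remaining edge at most once. These constraints are met by choosing the disks $D$ (resp.~$D_u,D_v$) small enough and by routing the detour through the distinct sectors at each root in their linear left-to-right order, which is possible precisely because $\gamma$ already linearly (and not cyclically) ordered the edges at $u$ (and $v$). Once $\gamma'$ is shown to be a valid snip, identifying the vertex set of the new bubble and reading off its root-sequence then follows directly from outerplanarity.
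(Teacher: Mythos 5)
This lemma is not proved in the paper at all---it is quoted verbatim from \cite[Lemma 2]{Kna-14}, so there is no in-paper argument to compare against; your snip-surgery strategy is the natural one and is essentially what the cited source does. Still, two points in your write-up do not survive scrutiny as stated. First, you require the disk $D$ to meet no vertex of $G$ other than $u$, yet the replacement arc $\alpha\subset D$ is supposed to pass through $u^1$ and $u^k$; these two requirements are incompatible. The surgery region must be a neighborhood of the closed star of $u$ (respectively of $\{u,v\}$ together with the edge $uv$ in part~2), and one then has to check separately that the pieces of $\alpha$ running alongside $uu^1$ and $uu^k$ out to $u^1$ and $u^k$ stay in the outer face of $G$---which holds because $uu^1$ and $uu^k$ border the outer face of $H$ and $H-\{u\}$ is attached to the rest of $G$ only through $u$. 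This is exactly the step that the phrase ``runs along the inside of $uu^1$ to $u^1$'' sweeps under the rug.

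Second, and more substantively, your treatment of the triangle case in part~2 is wrong as written: if the two detour arcs ``meet at $w$,'' then $\gamma'$ passes through the vertex $w=u^k=v^1$, whereas a snip may pass through no vertices other than its two designated endpoints, here $u^1$ and $v^\ell$ (and $w$ is not one of them unless $k=1$ or $\ell=1$). The correct observation is that the triangle case needs no special routing at all: the inner portion of $\gamma'$ crosses $uu^2,\ldots,uu^k$ and $vv^1,\ldots,vv^{\ell-1}$ exactly as in the generic case (note that it must cross $uu^k$ and $vv^1$, which your description of the arcs in $D_u$ and $D_v$ leaves implicit), traverses the inner face containing $uv$ between those two crossings, and leaves $w$ strictly inside; the only role of the triangle hypothesis is that $u^k$ and $v^1$ then name the same vertex and hence contribute a single entry to the root-sequence. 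With these repairs the argument goes through and matches the proof in \cite{Kna-14}.
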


\begin{proof}[Theorem~\ref{thm:outerplanar}]
We fix $s\geq 2$, assume to be given an outerplanar graph $G$ with maximum degree $\Delta\leq 2s$, and construct an outerplanar one-bend drawing of $G$ with a prescribed set of $s$ slopes.
Actually, for most of the proof, we assume $s\geq 3$.
The case $s=2$ is sketched at the very end of the proof.

Let $D$ denote the set of $2s$ \emph{directions}, that is, oriented slopes from the prescribed set of $s$ slopes.
For a direction $d\in D$, let $d^-$ and $d^+$ denote respectively the previous and the next directions in the clockwise cyclic order on $D$.

We can assume without loss of generality that every vertex of $G$ has degree either $1$ or $2s$.
Indeed, we can raise the degree of any vertex by connecting it to new vertices of degree $1$ placed in the outer face.
With this assumption, at each vertex $u$, the direction in which one edge leaves $u$ determines the directions of the other edges at $u$.
When a vertex $u$ has all edge directions determined, we write $d(uv)$ to denote the direction determined for an edge $uv$ at $u$.

For an edge $uv$ drawn as a union of two segments $ux$ and $xv$ and for two directions $d_v,d_u\in D$ consecutive in the clockwise order on $D$, let $Q(uv,d_u,d_v)$ denote the quadrilateral $uxvy$, where $y$ is the intersection point of the rays going out of $u$ and $v$ in directions $d_u$ and $d_v$, respectively.
We express the condition that the point $y$ exists saying that the quadrilateral is \emph{well defined}.

First, consider the setting of Lemma~\ref{lem:bubble} statement~\ref{item:e-bubble}.
Assume that the edge $uv$ is the only predrawn part of $H$.
Assume further that two \emph{leading directions} $d_v,d_u\in D$ that are consecutive in the clockwise order on $D$ and have the following properties are provided:
\begin{enumerate}
\item[a.] $-d_u\notin\{d(uu^1),\ldots,d(uu^k)\}$ and $-d_v\notin\{d(vv^1),\ldots,d(vv^\ell)\}$,
\item[b.] no part of the graph other than the edge $uv$ and some short initial parts of other edges at $u$ and $v$ is predrawn in the $\epsilon$-neighborhood $Q_\epsilon$ of the quadrilateral $Q=Q(uv,d_u,d_v)$, for some sufficiently small $\epsilon>0$.
\end{enumerate}
We call $Q$ the \emph{target quadrilateral} for $H$.
We will draw $H$ in $Q_\epsilon$ in a way that will guarantee that $H$ does not cross the predrawn parts of the graph.
To this end, we need to draw the edges $uu^1,\ldots,uu^k,\allowbreak vv^1,\ldots,vv^\ell$ and the bubble $H'=H-\{u,v\}$ obtained in the conclusion of Lemma~\ref{lem:bubble} statement~\ref{item:e-bubble}.

\begin{figure}[t]
\begin{center}
\includegraphics[width=.9\textwidth]{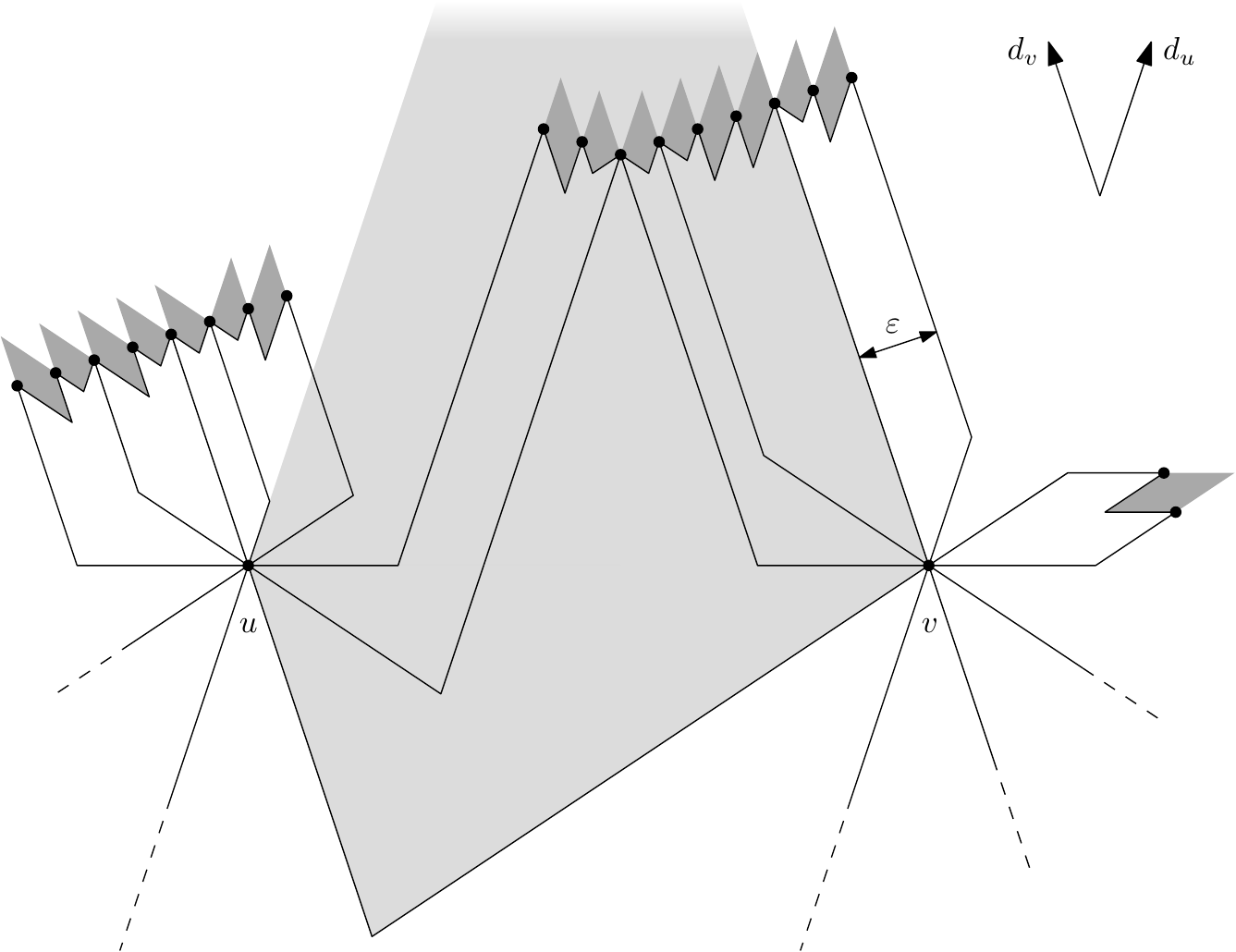}
\caption{Drawing bubbles: a v-bubble of Case~3 (left), an e-bubble (middle), and a v-bubble of Case~2 (right).
The directions $d_u$ and $d_v$ used to draw the e-bubble are also shown.
The target quadrilaterals for e-bubbles are grayed.}
\label{fig:drawing}
\end{center}
\end{figure}

The edges $uu^1,\ldots,uu^k,\allowbreak vv^1,\ldots,vv^\ell$ and the root-path $P$ of $H'$ are drawn in $Q_\epsilon$ in such a way that the following conditions are satisfied:
\begin{itemize}
\item each edge $uu^i$ leaves $u$ in direction $d(uu^i)$, bends shortly after (but far~enough to avoid crossing other edges at $u$), and continues to $u^i$ in direction $d_u$,
\item each edge $vv^i$ leaves $v$ in direction $d(vv^i)$, bends shortly after (but far~enough to avoid crossing other edges at $v$), and continues to $v^i$ in direction $d_v$,
\item each edge $xy$ of $P$ leaves $x$ in direction $-d_v^-$ if $x\in\{v^1,\ldots,v^{\ell-1}\}$ or $-d_v$ otherwise, and leaves $y$ in direction $-d_u^+$ if $y\in\{u^2,\ldots,u^k\}$ or $-d_u$ otherwise,
\item for every edge $xy$ of $P$, the quadrilateral $Q(xy,d_u,d_v)$ is well defined.
\end{itemize}
Figure~\ref{fig:drawing} illustrates how to achieve such a drawing.
As a consequence, $d_v$ and $d_u$ can be assigned as leading directions to the e-bubbles of the splitting sequence of $H'$, because (a) at their roots, the directions $-d_v$ and $-d_u$ are occupied by edges of the root-path of $H'$ or by edges going to $u$ and $v$, and (b) their target quadrilaterals are pairwise disjoint except at their common vertices and are contained in $Q_\epsilon$ far enough from $u$ and $v$.
The drawing of $H$ is completed by drawing all bubbles of the splitting sequence of $H'$ recursively.

Now, consider the setting of Lemma~\ref{lem:bubble} statement~\ref{item:v-bubble}.
Assume that the vertex $u$ is the only predrawn part of $H$.
For $\epsilon>0$ as small as necessary, we will draw $H$ in the $\epsilon$-neighborhood of the cone at $u$ spanned clockwise between the rays in directions $d(uu^1)$ and $d(uu^k)$.
To this end, we need to draw the edges $uu^1,\ldots,uu^k$ and the bubble $H'=H-\{u\}$ obtained in the conclusion of Lemma~\ref{lem:bubble} statement~\ref{item:v-bubble}.
Then, the drawing of $H$ can be scaled down towards $u$ so as to avoid crossing the other predrawn parts of the graph.
We distinguish three cases:

\noindent\textit{Case 1:\/ $k=1$}.
The edge $uu^1$ is drawn as a straight-line segment in direction $d(uu^1)$, and the v-bubbles of the splitting sequence of $H'$ are drawn recursively.

\noindent\textit{Case 2:\/ $k=2$}.
The edges $uu^1$ and $uu^2$ and the root-path $P$ of $H'$ are drawn in~such a way that the following conditions are satisfied:
\begin{itemize}
\item the edge $uu^1$ leaves $u$ in direction $d(uu^1)$, bends, and continues to $u^1$ in direction $d(uu^2)$,
\item the edge $uu^2$ leaves $u$ in direction $d(uu^2)$, bends, and continues to $u^2$ in direction $d(uu^1)$,
\item each edge $xy$ of $P$ leaves $x$ in direction $-d(uu^1)$ and $y$ in direction $-d(uu^2)$,
\item for every edge $xy$ of $P$, the quadrilateral $Q(xy,d(uu^2),d(uu^1))$ is well defined.
\end{itemize}
Figure~\ref{fig:drawing} illustrates how to achieve such a drawing.
As a consequence, $d(uu^1)$ and $d(uu^2)$ can be assigned as leading directions to the e-bubbles of the splitting sequence of $H'$.
The drawing of $H$ is completed by drawing all bubbles of the splitting sequence of $H'$ recursively.

\noindent\textit{Case 3:\/ $k\geq 3$}.
Let $P$ denote the root-path of $H$ and $u^{k-1}x_1\ldots x_mu^k$ denote the part of $P$ between $u^{k-1}$ and $u^k$.
Choose a direction $d\in\{d(uu^1),\ldots,d(uu^k)\}$ so that $-d\notin\{d(uu^1),\ldots,d(uu^k)\}$.
The edges $uu^1,\ldots,uu^k$ and the root-path $P$ are drawn in such a way that the following conditions are satisfied:
\begin{itemize}
\item each edge $uu^i$ leaves $u$ in direction $d(uu^i)$, bends shortly after, and continues to $u^i$ in direction $d$,
\item each edge $xy$ of $P$ leaves $x$ in direction $-d$ if $x\in\{x_1,\ldots,x_m\}$ or $-d^-$ otherwise, and leaves $y$ in direction $-d^+$ if $y\in\{u^2,\ldots,u^{k-1},x_1,\ldots,x_m,u^k\}$ or $-d$ otherwise.
\item for every edge $xy$ of $P$, the quadrilateral $Q(xy,d_x^{xy},d_y^{xy})$ is well defined, where
\begin{equation*}
(d_x^{xy},d_y^{xy})=\begin{cases}
(d,d^-) & \text{if $x,y\in\{u^{k-1},x_1,\ldots,x_m,u^k\}$,} \\
(d^+,d) & \text{otherwise.}
\end{cases}
\end{equation*}
\end{itemize}
Again, Figure~\ref{fig:drawing} illustrates how to achieve such a drawing.
As a consequence, $d_y^{xy}$ and $d_x^{xy}$ can be assigned as leading directions to every e-bubble of the splitting sequence of $H'$, where $x$ and $y$ are the roots of the e-bubble (case distinction in the definition of $(d_x^{xy},d_y^{xy})$ is needed to ensure property~a).
The drawing of $H$ is completed by drawing all bubbles of the splitting sequence of $H'$ recursively.

To complete the proof for $s\geq 3$, pick any vertex $u$ of $G$ of degree $1$, assign an arbitrary direction to the edge at $u$, and continue the drawing as in Case~1.

\begin{figure}[t]
\begin{center}
\includegraphics[width=.9\textwidth]{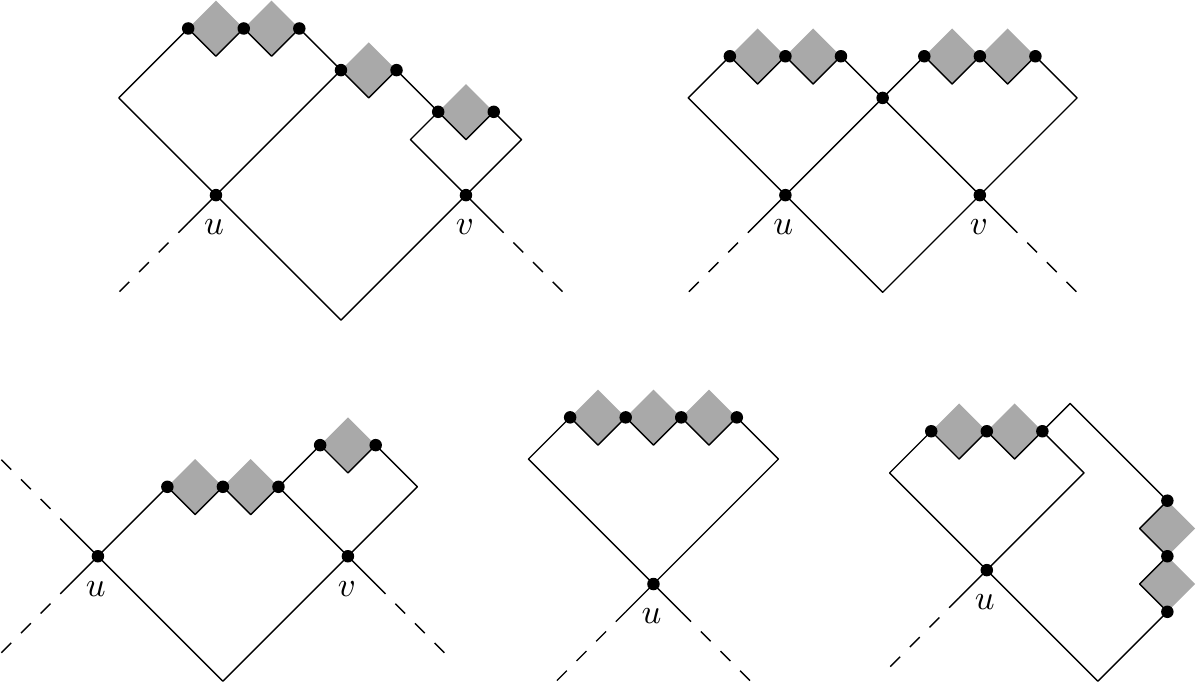}
\caption{Various ways of drawing v- and e-bubbles when $s=2$.
The target quadrilaterals for recursive e-bubbles are grayed.
The edges of the root-path of $H'$ that form trivial e-bubbles do not have target quadrilaterals.}
\label{fig:drawing2}
\end{center}
\end{figure}

The proof for $s=2$ keeps the same general recursive scheme following from Lemma~\ref{lem:bubble}.
As before, all e-bubbles are drawn in $\epsilon$-neighborhoods of their target quadrilaterals, which are always parallelograms when $s=2$.
The details of the drawing algorithm for various possible cases should be clear from Fig.~\ref{fig:drawing2}.
\end{proof}

\section{Planar Graphs and Planar Bipartite Graphs}

Using contact representations as in \cite[Theorem~2]{Kes-13}, where the upper bound of $2\Delta$ on the planar one-bend slope number is shown for planar graphs, we improve the upper bounds on this parameter for planar and bipartite planar graphs.

\begin{proposition}\label{prop:planar}
Every planar graph with maximum degree\/ $\Delta$ admits a planar one-bend drawing using at most\/ $\Delta+\lceil\frac{\Delta}{2}\rceil-1$ slopes.
\end{proposition}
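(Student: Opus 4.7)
The plan is to upgrade the contact-representation proof of the $2\Delta$ bound from~\cite[Theorem~2]{Kes-13} by letting the second segment of each edge reuse slopes that are already in the palette. Roughly speaking, I will show that $\lceil\Delta/2\rceil$ slopes suffice for the segments incident to vertices (the ``port slopes''), while only $\Delta-1$ extra slopes are needed for the bend segments.

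First, I fix an arbitrary planar embedding of $G$ and, at each vertex $v$, assign the at most $\Delta$ edges of $v$ to $\Delta$ \emph{ports} placed around $v$ in $\Delta$ pairwise distinct directions that are consistent with the embedding's cyclic order. The $\Delta$ port directions are realized using only $\lceil\Delta/2\rceil$ slopes, since opposite directions share a slope. Drawing each vertex as a small ``star'' of $\Delta$ short outgoing segments gives local pieces of the drawing near every vertex; the first segment of each edge $uv$ is then forced to lie on the port slope at $u$, and the last segment on the port slope at $v$. This produces the $\lceil\Delta/2\rceil$ summand.

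Next, I need to connect, for every edge $uv$, its port tip at $u$ and its port tip at $v$ by a bent path with one additional segment. Following the contact-representation philosophy of~\cite{Kes-13}, the bend segment of $uv$ may be slid along a narrow corridor between the two port tips, and only its \emph{slope} is constrained; any slope transverse to the corridor works as long as it does not coincide with the slope of a neighboring bend segment incident to the same vertex. I will argue that a choice of $\Delta - 1$ additional slopes is enough: around each vertex $v$, the $\Delta$ bend segments of edges incident to $v$ are naturally ordered by the cyclic port order, and two consecutive bend segments must use different slopes, but the first and last in the cyclic order may safely share one. A local scaling-down of each star of port tips (exactly as in the proof of Theorem~\ref{thm:outerplanar}) then guarantees that bend segments of non-incident edges do not interfere. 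Summing the two palettes gives the bound $\Delta + \lceil\Delta/2\rceil - 1$.

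The main obstacle, I expect, is the global compatibility of the bend-slope assignment: locally each vertex needs at most $\Delta - 1$ extra slopes, but a priori different vertices might demand disjoint sets of $\Delta - 1$ slopes. I would resolve this by choosing the $\Delta - 1$ additional slopes once and for all at the beginning, and then allowing, at each vertex $v$, the bend segments to be shortened and tilted within their corridors so that the required slope is always available. Verifying that such a simultaneous choice can be made without crossings is the delicate step; it essentially amounts to a planarity-preserving local perturbation argument, analogous to the ``target-quadrilateral'' technique employed throughout the proof of Theorem~\ref{thm:outerplanar}, which I would carry out by treating each face of $G$ independently and then gluing the face-wise drawings.
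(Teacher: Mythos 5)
Your construction does not produce a one-bend drawing. You attach to each edge $uv$ a short ``port'' segment at $u$, a short port segment at $v$, and then ``one additional segment'' joining the two port tips: that is three segments and two bends per edge. A polygonal path with one bend has only two segments, so there is no room for a separate bend segment sitting between two port segments. (Note that a planar \emph{two}-bend drawing with only $\lceil\frac{\Delta}{2}\rceil$ slopes is already known~\cite{Kes-13}, so even a completed version of your scheme would prove a statement in the wrong regime.) To get a genuine one-bend drawing you must decide, for each edge, which of its two endpoints uses a slope from the small palette and which uses a slope from the large one, and this decision has to be globally consistent with a planar routing.

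The paper obtains exactly this consistency from the T-shape contact representation of planar graphs~\cite{Fra-94}: every contact is between a horizontal piece of one T-shape and the vertical segment of another, so each edge naturally receives one almost-horizontal and one almost-vertical segment meeting near the contact point. The summand $\lceil\frac{\Delta}{2}\rceil$ comes from balancing the contact points between the left and right halves of each horizontal bar (left-going and right-going rays reuse the same slopes), and the summand $\Delta-1$ comes from the up to $\Delta-1$ contacts that can lie on a single vertical segment below the center, all of which require distinct downward directions. Your proposal has no analogue of this global structure: the $\Delta-1$ count is justified only by a local claim that consecutive bend segments around a vertex must differ while the first and last may coincide, which as stated would call for $2$ slopes rather than $\Delta-1$, so the accounting is not actually derived from the construction; and the crucial step of routing all middle segments planarly with one fixed palette is deferred to a ``treat each face independently and glue'' argument that is never carried out and for which no mechanism is offered. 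Both the violation of the one-bend requirement and the missing global routing are genuine gaps.
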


\begin{proof}
Let $G$ be a graph as in the statement.
By \cite[Theorem 4.1]{Fra-94}, $G$ can be represented as a contact graph of T-shapes in the plane.
Every T-shape consists of a horizontal segment and a vertical segment touching at the upper endpoint of the vertical one.
That point, called the \emph{center} of the T-shape, splits the horizontal segment into the \emph{left segment} and the \emph{right segment} of the T-shape.
The T-shapes of the contact representation are modified as follows: for each T-shape, considered one by one in the top-down order of horizontal segments, move its vertical segment horizontally so as to make its left segment and its right segment contain at most $\lceil\frac{\Delta}{2}\rceil$ contact points with other T-shapes, and scale accordingly the two bottomless rectangular stripes going down from the left and the right segment.
This keeps the contact graph unchanged.

We construct a one-bend drawing of $G$ using a set $S_H$ of $\smash[t]{\lceil\frac{\Delta}{2}\rceil}$ almost horizontal slopes and a set $S_V$ of $\Delta-1$ almost vertical slopes.
We place each vertex $v$ at the center of the T-shape representing $v$ unless all contact points of the T-shape lie on the vertical segment.
In the latter case, we put $v$ on the vertical segment of the T-shape so that it splits the segment into two parts containing at most $\lceil\frac{\Delta}{2}\rceil$ contact points.
A vertex placed at the center of a T-shape emits at most $\smash[t]{\lceil\frac{\Delta}{2}\rceil}$ rays with slopes from $S_H$ towards the contact points on the left segment, at most $\smash[t]{\lceil\frac{\Delta}{2}\rceil}$ rays with slopes from $S_H$ towards the contact points on the right segment, and at most $\Delta-1$ rays with slopes from $S_V$ towards the contact points on the vertical segment.
A vertex placed on the vertical segment of a T-shape emits at most $\lceil\frac{\Delta}{2}\rceil$ rays with slopes from $S_V$ towards the contact points on either of the two parts of the segment.
For every edge of $G$, two appropriately chosen rays, one with slope from $S_H$ and one with slope from $S_V$, are joined near the corresponding contact point to form a representation of that edge in the claimed planar one-bend drawing of $G$, see Fig.~\ref{fig:planar}.
\end{proof}

\begin{figure}[t]
\centering
\subfigure[\label{fig:planar}]{\includegraphics{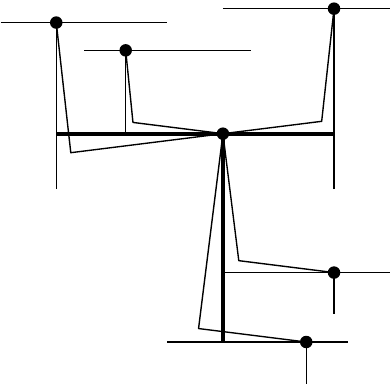}}
\hspace{1.5em}%
\subfigure[\label{fig:planarbip}]{\includegraphics{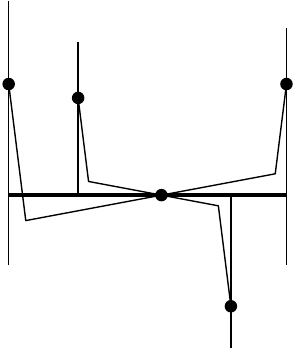}}
\hspace{1.5em}%
\subfigure[\label{fig:planarbiplb}]{\includegraphics[width=.33\textwidth]{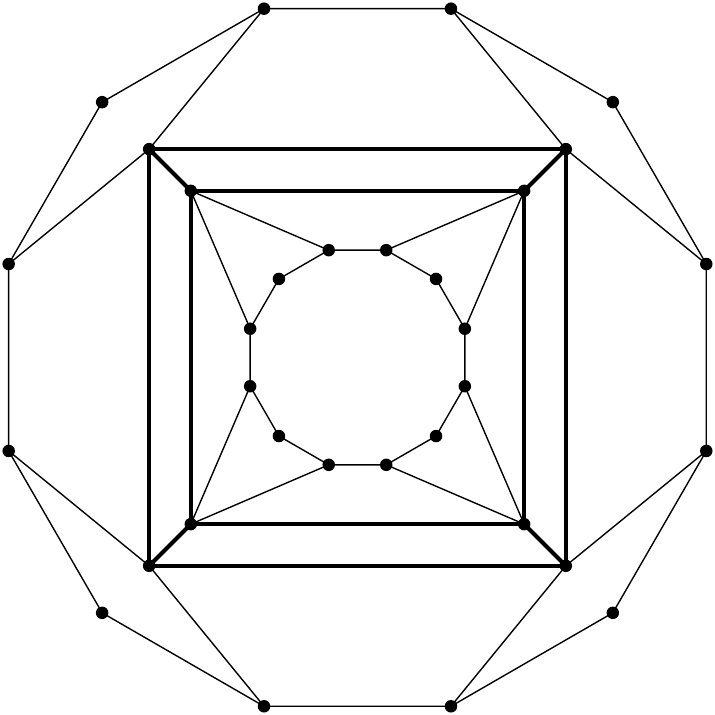}}
\caption{\subref{fig:planar} One-bend drawing of a planar graph\quad\subref{fig:planarbip} One-bend drawing of a bipartite planar graph\quad\subref{fig:planarbiplb} Graph $G_5$ constructed in the proof of Proposition~\ref{prop:lowbipplanar}}
\end{figure}

\begin{proposition}\label{prop:bipplanar}
Every bipartite planar graph with maximum degree\/ $\Delta$ admits a planar one-bend drawing using at most\/ $2\lceil\frac{\Delta}{2}\rceil$ slopes.
\end{proposition}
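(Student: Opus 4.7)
The plan is to follow the same contact-representation strategy as in the proof of Proposition~\ref{prop:planar}, but using a stronger representation available for bipartite planar graphs. Every bipartite planar graph is the contact graph of an axis-aligned segment arrangement in which the vertices of one color class are represented by horizontal segments and those of the other by vertical segments, with each edge corresponding to a contact point where the endpoint of one segment touches the interior of the other. I would start from such a representation, which can be made so that no three segments share a common point.

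On each segment I would place the corresponding vertex $u$ at an interior point splitting the at most $\Delta$ contact points on the segment into two groups of at most $\lceil\frac{\Delta}{2}\rceil$ each. The drawing would use a set $S_H$ of $\lceil\frac{\Delta}{2}\rceil$ slopes close to horizontal and a set $S_V$ of $\lceil\frac{\Delta}{2}\rceil$ slopes close to vertical, giving $2\lceil\frac{\Delta}{2}\rceil$ slopes in total. Each vertex $u$ lying on a horizontal segment would emit at most $\lceil\frac{\Delta}{2}\rceil$ rays with distinct slopes from $S_H$ to the contact points on its left and at most $\lceil\frac{\Delta}{2}\rceil$ rays with distinct slopes from $S_H$ to the contact points on its right; symmetrically, every vertex on a vertical segment would use slopes from $S_V$. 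For an edge $uv$ with $u$ on a horizontal segment and $v$ on a vertical segment meeting at a contact point $p$, the edge would be drawn as the polygonal path from $u$ via the intersection point $b$ of the two associated rays, which lies near $p$ and serves as the bend; see Fig.~\ref{fig:planarbip}.

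The main obstacle is planarity. It reduces to choosing all slopes in $S_H$ close enough to horizontal and all slopes in $S_V$ close enough to vertical. Under this condition, every ray stays inside an arbitrarily thin strip around its underlying segment, so two rays from distinct edges can meet only in the vicinity of a contact point of the two underlying segments. Since the segments themselves form a non-crossing contact representation in the plane, the only potential conflicts are between edges that share a common contact point; such edges also share an endpoint, and the bends together with the final portions of the rays can be locally separated by further tightening the slopes. This yields a planar one-bend drawing of $G$ using $2\lceil\frac{\Delta}{2}\rceil$ slopes.
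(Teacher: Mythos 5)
Your proposal is correct and follows essentially the same approach as the paper: both use the representation of bipartite planar graphs as contact graphs of horizontal and vertical segments, place each vertex so as to split its segment's contact points into two groups of at most $\lceil\frac{\Delta}{2}\rceil$, and join one nearly horizontal and one nearly vertical ray near each contact point. Your additional discussion of why planarity is preserved by taking the slopes sufficiently close to horizontal and vertical is a fair elaboration of what the paper leaves implicit.
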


\begin{proof}
Let $G$ be a graph as in the statement.
By \cite[Theorem 1.5]{Fra-95}, $G$ can be represented as a contact graph of horizontal and vertical segments in the plane.
We construct a one-bend drawing of $G$ using a set $S_H$ of $\smash[t]{\lceil\frac{\Delta}{2}\rceil}$ almost horizontal slopes and a set $S_V$ of $\smash[t]{\lceil\frac{\Delta}{2}\rceil}$ almost vertical slopes.
We place every vertex $v$ of $G$ on the segment representing $v$ so that it splits the segment into two parts containing at most $\lceil\frac{\Delta}{2}\rceil$ contact points with other segments.
A vertex placed on a horizontal segment emits at most $\smash[t]{\lceil\frac{\Delta}{2}\rceil}$ rays with slopes from $S_H$ towards the contact points on either of the two parts of the segment.
Similarly, a vertex placed on a vertical segment emits at most $\smash[t]{\lceil\frac{\Delta}{2}\rceil}$ rays with slopes from $S_V$ towards the contact points on either of the two parts of the segment.
For every edge of $G$, two appropriately chosen rays, one with slope from $S_H$ and one with slope from $S_V$, are joined near the corresponding contact point to form a representation of that edge in the claimed planar one-bend drawing of $G$, see Fig.~\ref{fig:planarbip}.
\end{proof}

The following is a straightforward adaptation of \cite[Theorem~4]{Kes-13}, where planar graphs with planar one-bend slope number at least $\frac{3}{4}(\Delta-1)$ are constructed.

\begin{proposition}\label{prop:lowbipplanar}
For every\/ $\Delta\geq 3$, there is a planar bipartite graph with maximum degree\/ $\Delta$ and with planar one-bend slope number at least\/ $\frac{2}{3}(\Delta-1)$.
\end{proposition}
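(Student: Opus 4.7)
The plan is to adapt the construction of \cite[Theorem~4]{Kes-13}, which yields a planar graph with planar one-bend slope number at least $\frac{3}{4}(\Delta-1)$, so that the resulting graph is bipartite. In the original argument one takes a central vertex $v$ of degree $\Delta$ and places a rigid triangular gadget in each of the $\Delta$ faces of its embedding, which locally constrains the slopes of the two $v$-incident edges bounding that face. To preserve bipartiteness, I would replace each triangle by a 4-cycle obtained by inserting a new vertex into the face; additional vertices can be attached if necessary to force the combinatorial embedding at $v$ to be essentially unique. The graph $G_5$ depicted in Fig.~\ref{fig:planarbiplb} would be the resulting construction for a specific small parameter.

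Given any planar one-bend drawing of $G_\Delta$ with $s$ slopes, let $\delta_1,\ldots,\delta_\Delta$ be the departure directions of the edges at $v$ listed in the (fixed) cyclic order; these are $\Delta$ distinct directions drawn from the set of $2s$ available directions. The goal is to show that sufficiently many of the $2s-\Delta$ unused directions must appear as ``skipped'' gaps between consecutive $\delta_i$'s, namely at least roughly $\Delta/3$ of them, yielding
\begin{equation*}
2s\geq\Delta+\lceil(\Delta-4)/3\rceil,
\end{equation*}
which rearranges to $s\geq\frac{2}{3}(\Delta-1)$. The geometric content is that each 4-cycle gadget bounding two consecutive edges of $v$ restricts the pair of their departure directions, and that every three consecutive gadgets jointly force at least one skipped direction in the cyclic order. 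In the triangle case of \cite{Kes-13} a tighter version of this statement—one skip per two gadgets—gives the better constant $\tfrac{3}{4}$; here the extra vertex and extra segments in the 4-cycle give more geometric freedom, which is exactly why the constant degrades from $\tfrac{3}{4}$ to $\tfrac{2}{3}$.

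The main obstacle is the geometric case analysis establishing the per-three-gadgets skip claim. One has to consider three consecutive 4-cycle faces at $v$ and show that if none of the three associated gaps skipped a direction—i.e.\ the six incident departure directions appeared consecutively in the cyclic order on directions—then the gadget edges, each equipped with one bend, could not be realized as three simple planar closed regions compatibly with the surrounding drawing. This mirrors the argument of \cite[Theorem~4]{Kes-13}, but requires tracking one additional segment per face; carefully enumerating the possible orientations of the bends of the eight segments in the three gadgets and checking planarity in each case should suffice.
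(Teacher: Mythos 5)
Your proposal diverges from the paper's argument and, as it stands, has a genuine gap at its core. The paper's construction is not a wheel of gadgets around a high-degree central vertex: it starts from a plane drawing of the $3$-cube, picks two opposite quadrilateral faces, and inside each attaches a cycle so that every boundary vertex of the face sends $\Delta-3$ edges into that face. Since at least one of the two chosen faces must be an inner face $F$ in any planar drawing, $F$ is drawn as a simple polygon with at most $8$ sides, and the bound follows from one clean counting lemma: in a simple $k$-gon every slope is covered exactly $k-2$ times by the interior angles, so each slope can be used by at most $k-2=6$ of the $8+4(\Delta-3)$ segments that are sides of $F$ or initial parts of edges entering its interior; hence $4\Delta-4\leq 6|S|$. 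This is a single global count over one face, with no case analysis and no discussion of departure directions at a vertex.

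The missing idea in your write-up is precisely the step you defer: the claim that every three consecutive $4$-cycle gadgets around $v$ force a skipped direction is the entire content of the proof, and you give no argument for it beyond the hope that an enumeration of bend orientations ``should suffice.'' It is also doubtful as stated: a quadrilateral face drawn with one-bend edges is an octagon whose interior angles sum to $6\pi$ over eight corners, so the interior angle at $v$ can be made arbitrarily small (the free gadget vertex and the bends absorb the remaining angle), and nothing local prevents consecutive departure directions $\delta_i,\delta_{i+1}$ from being adjacent in the cyclic order for many $i$ in a row; any obstruction must come from a global count, which is exactly what the polygon lemma supplies. You also do not handle the face at $v$ that is, or contains, the outer face, where the interior-angle reasoning reverses; the paper sidesteps this by using two opposite faces of the cube, one of which is forced to be inner. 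If you want to rescue your route you would have to prove the per-three-gadgets skip claim rigorously, but the angle-sum count applied to a single forced inner quadrilateral face is both simpler and already gives the stated bound.
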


\begin{proof}
A graph $G_\Delta$ of maximum degree $\Delta$ is constructed starting from a plane drawing of the $3$-dimensional cube.
Two opposite faces of the cube are chosen, say, the outer and the central.
In either of them, a cycle of $8\Delta-28$ new vertices is drawn; then, each boundary vertex of the face picks a subpath of $2\Delta-7$ vertices of the cycle and connects to the $\Delta-3$ odd vertices of the subpath, see Fig.~\ref{fig:planarbiplb}.

It is well known that the measures of the interior angles of a simple $k$-gon sum up to $(k-2)\pi$.
This is a consequence of a more general observation: if $P$ is a simple $k$-gon (with angles of measure $\pi$ allowed), then every slope is covered exactly $k-2$ times by interior angles of $P$.
For the purpose of this statement, at each vertex of $P$, either of the two directions of a slope is counted separately---once if it points towards the interior of $P$ and $\frac{1}{2}$ times if it points towards the boundary of $P$.
Therefore, if $S$ is a set of slopes and $P$ is a simple $k$-gon drawn using slopes from $S$, then every slope from $S$ can be used by at most $k-2$ segments that are sides of $P$ or go from a vertex of $P$ towards the interior of $P$.

Suppose we are given a planar one-bend drawing of $G_\Delta$ using a set of slopes $S$.
The restriction of the drawing to the starting cube must have one of the two selected faces, call it $F$, as an inner face.
The face $F$ is drawn as a simple octagon (with angles of measure $\pi$ allowed), and each of the four vertices of the cube that lie on the boundary of $F$ emits $\Delta-3$ edges towards the interior of $F$.
By the observation above, every slope from $S$ can be used by at most $6$ of the $8+4(\Delta-3)$ segments that are sides of the octagon or initial parts of the edges going from the four vertices towards the interior of $F$.
We conclude that $8+4(\Delta-3)\leq 6|S|$, which yields $|S|\geq\frac{2}{3}(\Delta-1)$.
\end{proof}

\section{General Graphs}

The main contribution of this section is to show the following:

\begin{theorem}\label{thm:general}
Every graph with maximum degree\/ $\Delta$ admits a one-bend drawing using at most\/ $\lceil\frac{\Delta}{2}\rceil+1$ slopes.
Such a drawing exists with all vertices placed on a common line.
Furthermore, the set of slopes can be prescribed arbitrarily.
\end{theorem}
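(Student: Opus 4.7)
The plan is to place all vertices on a common horizontal line $\ell$ and draw each edge as a polygonal path whose single bend lies either above or below $\ell$. Given the prescribed set of $\lceil\Delta/2\rceil+1$ slopes, each vertex has at its disposal $2(\lceil\Delta/2\rceil+1)\geq\Delta+2$ possible stub directions (two opposite ones per slope), comfortably more than the maximum degree. The argument splits into a combinatorial step that assigns a stub direction to every half-edge (edge-endpoint incidence) of $G$, and a geometric step that then realizes the drawing.

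For the combinatorial step I would produce an assignment of a (slope, orientation) pair to every half-edge of $G$ such that (i) at every vertex $v$ all pairs are pairwise distinct, so that no two segments emanating from $v$ overlap, and (ii) for every edge $uv$ the two slopes assigned at $u$ and $v$ are different, so that the two rays defining the edge are not parallel and therefore meet at a single well-defined bend point. I would prove the existence of such an assignment by starting from a proper edge coloring of $G$ with at most $\Delta+1$ colors (by Vizing's theorem), pairing the color classes into $\lceil(\Delta+1)/2\rceil\leq\lceil\Delta/2\rceil+1$ groups, and matching each group with one prescribed slope. Each group is the union of two matchings and hence a subgraph of maximum degree $2$; at any vertex the at most two incident half-edges coming from a single group are assigned the two opposite orientations of that group's slope, automatically giving (i). For an edge whose two half-edges would in this process receive the same slope, I would reroute one of them to the spare slope that remains after the pairing, which is precisely what the "+1" in the bound is for.

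Once the slope at each half-edge is fixed, the bend of each edge is completely determined by the positions of the endpoints on $\ell$, so the geometric step reduces to choosing those positions carefully. Placing the vertices at coordinates with rapidly growing gaps (say $v_i$ at $x$-coordinate $N^i$ for a sufficiently large constant $N$) forces every bend to fall in a tiny neighborhood close to its own edge and disjoint from every other vertex and every other bend, giving a valid one-bend drawing that uses only the prescribed slopes and keeps all vertices on $\ell$.

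The main obstacle is the combinatorial step: matching Vizing color classes with slopes takes care of (i) almost for free, but (ii) can be violated by edges whose two half-edges end up inside the same paired group, and the delicate part of the proof is to show that all such violations can be repaired by a local swap using the extra slope. Checking that these swaps can be carried out globally without creating new conflicts at adjacent vertices is where the argument requires most care, and it is also the point at which the bound $\lceil\Delta/2\rceil+1$ (rather than the trivial lower bound $\lceil\Delta/2\rceil$) becomes essential.
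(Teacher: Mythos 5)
Your combinatorial step has a genuine gap, and it is much larger than the ``delicate part'' you flag. An edge of $G$ lies in exactly one Vizing color class and hence in exactly one of your paired groups, so if each group is matched with a single slope then \emph{every} edge of $G$ --- not just some exceptional ones --- receives the same slope at both of its endpoints and violates (ii). Thus every edge must be repaired by moving one of its two half-edges onto a different slope. At a vertex of degree $\Delta$ the initial assignment already occupies $\Delta$ of the $2(\lceil\frac{\Delta}{2}\rceil+1)\leq\Delta+2$ directions in matched opposite pairs, so at most two half-edges per vertex can be rerouted without landing on a direction that is already taken (and note that for even $\Delta$ you have $\lceil\frac{\Delta+1}{2}\rceil=\lceil\frac{\Delta}{2}\rceil+1$ groups, so there is no wholly unused ``spare'' slope at all). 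Since each edge must be repaired at exactly one endpoint, you would need an orientation of $G$ in which every vertex has in-degree at most $2$; such an orientation exists only when every subgraph $H$ satisfies $|E(H)|\leq 2|V(H)|$, and it fails already for $\Delta$-regular graphs with $\Delta\geq 5$, where some vertex must absorb at least $\lceil\frac{\Delta}{2}\rceil\geq 3$ repairs. So the plan ``assign one slope per pair of color classes, then fix collisions with the extra slope'' cannot be carried out. A secondary problem: condition (ii) is necessary but not sufficient for the bend to exist, because two non-parallel \emph{rays} need not intersect; you also need the two chosen directions to be compatible with the left-to-right order of the endpoints on $\ell$, which your setup never enforces.

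The paper's proof starts from the same two ingredients (all vertices on a line, Vizing's theorem) but distributes slopes asymmetrically instead of pairing color classes. With $k=\lceil\frac{\Delta}{2}\rceil+1$ slopes, the matchings $M_1,\ldots,M_{k-1}$ are drawn above $\ell$ and $M_k,\ldots,M_\Delta$ below; an edge of $M_i$ above $\ell$ uses slope $s_i$ at its \emph{right} endpoint and a greedily chosen slope of strictly larger index (or the last slope $s_k$) at its \emph{left} endpoint. This makes the two slopes of every edge distinct by construction, guarantees that the two rays meet above $\ell$, and uses at most $k-1$ of the $k$ slopes on each side of $\ell$ at each vertex. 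The leftover matching $M_{\Delta+1}$ is then handled separately: the two endpoints of each of its edges are placed in a common tiny interval of $\ell$, and a counting argument shows that on one side of $\ell$ they have two distinct free slopes with which the edge can be drawn as a short tent. If you want to rescue your approach, the essential missing idea is precisely this built-in left/right asymmetry that prevents the two half-edges of an edge from ever receiving the same slope in the first place.
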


\begin{proof}
Let $G$ be a graph with maximum degree $\Delta$.
By Vizing's theorem~\cite{Viz-64}, $G$ has a proper edge-coloring using at most $\Delta+1$ colors, and moreover, such a coloring can be obtained in polynomial time~\cite{Mis-92}.
This yields a partition of the edge set of $G$ into $\Delta+1$ matchings $M_1,\ldots,M_{\Delta+1}$.
Let $n=|V(G)|-|M_{\Delta+1}|$, and let $f\colon V(G)\to\{1,\ldots,n\}$ be such that $f(u)=f(v)$ if and only if $uv\in M_{\Delta+1}$.

Let $S$ be a set of $k=\lceil\frac{\Delta}{2}\rceil+1$ slopes and $\ell$ be a line with slope not in $S$.
Without loss of generality, we can assume that $\ell$ is horizontal.
Order $S$ as $\{s_1,\ldots,s_k\}$ clockwise starting from the horizontal slope (that is, if $i<j$, then $s_i$ occurs before $s_j$ when rotating a line clockwise starting from the horizontal position).
Fix $n$ pairwise disjoint segments $I_1,\ldots,I_n$ in this order on $\ell$.

Each vertex $v$ of $G$ is placed on the segment $I_{f(v)}$.
Each edge $uv\in M_i$ with $1\leq i\leq k-1$ and $f(u)<f(v)$ is drawn above $\ell$ so that its slope at $v$ is $s_i$ and its slope at $u$ is $s_j$, where $j$ is the least index in $\{i+1,\ldots,k-1\}$ for which there is no edge $u'u\in M_j$ with $f(u')<f(u)$, or $j=k$ if such an index does not exist.
This way, since $M_1,\ldots,M_{k-1}$ are matchings, no two edges of $M_1,\ldots,M_{k-1}$ use the same slope at any vertex.
The edges of $M_k,\ldots,M_\Delta$ are drawn in an analogous way below $\ell$.
At least one slope above $\ell$ and at least one below $\ell$ are left free at every vertex.

Now, consider an edge $uv\in M_{\Delta+1}$.
In the drawing presented above, $u$ and $v$ have degree at most $\Delta-1$, so either of them has an additional free slope above or below $\ell$.
Therefore, either above or below $\ell$, there are two distinct slopes, one free at $u$ and the other free at $v$.
They can be used to draw the edge $uv$ if $u$ and $v$ are placed in an appropriate order within $I_{f(u)}=I_{f(v)}$.
Occurrence of bend points of some edges on other edges can be fixed by perturbing vertices slightly within their segments on $\ell$.
\end{proof}

The results of~\cite{Fel-14} directly yield a characterization of the graphs that require $\lceil\frac{\Delta}{2}\rceil+1$ slopes for a one-bend drawing when $\Delta\leq 4$.
We do not know of any graph that would require $\lceil\frac{\Delta}{2}\rceil+1$ slopes for a one-bend drawing when $\Delta\geq 5$.

\section*{Acknowledgment}

We thank Piotr Micek for fruitful discussions.

\bibliographystyle{splncs03}
\bibliography{slopebib}

\begin{thebibliography}{10}
\providecommand{\url}[1]{\texttt{#1}}
\providecommand{\urlprefix}{URL }

\bibitem{Bar-06}
Bar{\'a}t, J., Matou{\v{s}}ek, J., Wood, D.R.: Bounded-degree graphs have
  arbitrarily large geometric thickness. Electron. J. Combin.  13(1),  \#R3, 14
  pp. (2006)

\bibitem{Bie-98}
Biedl, T., Kant, G.: A better heuristic for orthogonal graph drawings. Comput.
  Geom.  9(3),  159--180 (1994)

\bibitem{DiG-14}
Di~Giacomo, E., Liotta, G., Montecchiani, F.: The planar slope number of
  subcubic graphs. In: Pardo, A., Viola, A. (eds.) LATIN 2014:\ Theoretical
  Informatics, Lecture Notes Comput. Sci., vol. 8392, pp. 132--143. Springer
  (2014)

\bibitem{Duj-07b}
Dujmovi{\'c}, V., Eppstein, D., Suderman, M., Wood, D.R.: Drawings of planar
  graphs with few slopes and segments. Comput. Geom.  38(3),  194--212 (2007)

\bibitem{Duj-07a}
Dujmovi{\'c}, V., Suderman, M., Wood, D.R.: Graph drawings with few slopes.
  Comput. Geom.  38(3),  181--193 (2007)

\bibitem{Duj-04}
Dujmovi{\'c}, V., Wood, D.R.: On linear layouts of graphs. Discrete Math.
  Theor. Comput. Sci.  6(2),  339--358 (2004)

\bibitem{Fel-14}
Felsner, S., Kaufmann, M., Valtr, P.: Bend-optimal orthogonal graph drawing in
  the general position model. Comput. Geom.  47(3),  460--468 (2014)

\bibitem{Fra-95}
de~Fraysseix, H., Ossona~de Mendez, P., Pach, J.: A left-first search algorithm
  for planar graphs. Discrete Comput. Geom.  13(3--4),  459--468 (1995)

\bibitem{Fra-94}
de~Fraysseix, H., Ossona~de Mendez, P., Rosenstiehl, P.: On triangle contact
  graphs. Combin. Prob. Comput.  3(2),  233--246 (1994)

\bibitem{Jel-13}
Jel{\'{\i}}nek, V., Jel{\'{\i}}nkov{\'a}, E., Kratochv{\'{\i}}l, J.,
  Lidick{\'y}, B., Tesa{\v{r}}, M., Vysko{\v{c}}il, T.: The planar slope number
  of planar partial {$3$}-trees of bounded degree. Graphs Combin.  29(4),
  981--1005 (2013)

\bibitem{Kan-96}
Kant, G.: Drawing planar graphs using the canonical ordering. Algorithmica
  16(1),  4--32 (1996)

\bibitem{Kes-13}
Keszegh, B., Pach, J., P{\'a}lv{\"o}lgyi, D.: Drawing planar graphs of bounded
  degree with few slopes. SIAM J. Discrete Math.  27(2),  1171--1183 (2013)

\bibitem{Kes-08}
Keszegh, B., Pach, J., P{\'a}lv{\"o}lgyi, D., T{\'o}th, G.: Drawing cubic
  graphs with at most five slopes. Comput. Geom.  40(2),  138--147 (2008)

\bibitem{Kna-14}
Knauer, K., Micek, P., Walczak, B.: Outerplanar graph drawings with few slopes.
  Comput. Geom.  47(5),  614--624 (2014)

\bibitem{Len-13}
Lenhart, W., Liotta, G., Mondal, D., Nishat, R.I.: Planar and plane slope
  number of partial 2-trees. In: Wismath, S., Wolff, A. (eds.) Graph Drawing,
  Lecture Notes Comput. Sci., vol. 8242, pp. 412--423. Springer (2013)

\bibitem{Liu-98}
Liu, Y., Morgana, A., Simeone, B.: A linear algorithm for {$2$}-bend embeddings
  of planar graphs in the two-dimensional grid. Discrete Appl. Math.  81(1--3),
   69--91 (1998)

\bibitem{Mis-92}
Misra, J., Gries, D.: A constructive proof of {V}izing's theorem. Inform.
  Process. Lett.  41(3),  131--133 (1992)

\bibitem{Muk-12}
Mukkamala, P., P{\'a}lv{\"o}lgyi, D.: Drawing cubic graphs with the four basic
  slopes. In: van Kreveld, M., Speckmann, B. (eds.) Graph Drawing, Lecture
  Notes Comput. Sci., vol. 7034, pp. 254--265. Springer (2012)

\bibitem{Muk-09}
Mukkamala, P., Szegedy, M.: Geometric representation of cubic graphs with four
  directions. Comput. Geom.  42(9),  842--851 (2009)

\bibitem{Pac-06}
Pach, J., P{\'a}lv{\"o}lgyi, D.: Bounded-degree graphs can have arbitrarily
  large slope numbers. Electron. J. Combin.  13(1),  \#N1, 4 pp. (2006)

\bibitem{Viz-64}
Vizing, V.G.: Ob otsenke khromaticheskogo klassa {$p$}-grafa ({O}n an estimate
  of the chromatic class of a {$p$}-graph). Diskret. Analiz  3,  25--30 (1964)

\bibitem{Wad-94}
Wade, G.A., Chu, J.H.: Drawability of complete graphs using a minimal slope
  set. Comput. J.  37(2),  139--142 (1994)

\end{thebibliography}

\end{document}